\def\ps@headings{
\def\@oddhead{\mbox{}\scriptsize\rightmark \hfil \thepage}%
\def\@evenhead{\scriptsize\thepage \hfil \leftmark\mbox{}}%
\def\@oddfoot{}
\def\@evenfoot{}}
\newtheorem{theorem}{Theorem}
\newenvironment{definition}[1][Definition:]{\begin{trivlist}
\item[\hskip \labelsep {\bfseries #1}]}{\end{trivlist}}
\begin{document}
\title{Finding $K$ Contingency List in Power Networks using a New Model of Dependency}
\author{\IEEEauthorblockN{Joydeep Banerjee\textsuperscript{1}, Anamitra Pal\textsuperscript{2}, Kaustav Basu\textsuperscript{1}, Malhar Padhee\textsuperscript{2}, Arunabha Sen\textsuperscript{1}}
\IEEEauthorblockA{\\
\small \textsuperscript{1}School of Computing, Informatics and Decision System Engineering\\
\small \textsuperscript{2}School of Electrical, Computer and Energy Engineering \\
\small Arizona State University\\
\small Tempe, Arizona, USA\\
\small Email: \{jbanerje, kbasu2, mpadhee@asu.edu, anamitra.pal, kbasu2, asen\}@asu.edu}
}
\maketitle 

\begin{abstract}
Smart grid systems are composed of power and communication network components. The components in either network exhibit complex dependencies on components in its own as well as the other network to drive their functionality. Existing, models fail to capture these complex dependencies. In this paper, we restrict to the dependencies in the power network and propose the  Multi-scale Implicative Interdependency Relation (MIIR) model that address the existing limitations. A formal description of the model along with its working dynamics and a brief validation with respect to the 2011 Southwest blackout are provided. Utilizing the MIIR model, the $K$ Contingency List problem is proposed. For a given time instant, the problem solves for a set of $K$ entities in a power network which when failed at that time instant would cause the maximum number of entities to fail eventually. Owing to the problem being NP-complete we devised a Mixed Integer Program (MIP) to obtain the optimal solution and a polynomial time sub-optimal heuristic. The efficacy of the heuristic with respect to the MIP is compared by using different bus system data. In general, the heuristic is shown to provide near optimal solution at a much faster time than the MIP. 
\end{abstract}

\begin{keywords}
MIIR model, Power Network, Cascading failure, Mixed Integer Program, Heuristic, Contingency List.
\end{keywords}

\section{Introduction}
\label{Intro}
The smart grid system constitutes both Power and Communication entities to execute different tasks and can be modeled as an interdependent Power-Communication network. Consider entities in either network. The electricity generation and power flows are partially controlled by the Supervisory Control and Data Acquisition System (SCADA) through signals received from Remote Terminal Units (RTUs). Meanwhile, every entity involved in sensing, sending and controlling the power grid (i.e. the Phasor Measurement Units (PMUs)) are dependent on power network entities to drive their functions. Owing to these dependencies, failure of some entities in either network may eventually result in a cascading failure causing widespread power blackouts.

In this article, we restrict to the analysis of cascading failure in smart grids considering power network dependencies in isolation. Consider that the control centers are not prone to cyber-attacks/physical failures and have additional power backup (independent of the underlying power grid) to make them operational under all circumstances. Even with such an assumption, studying the power network dependencies would provide potential information regarding the vulnerability and reliability of the smart grid system. It is to be noted that analysis under this assumption does bring into effect the failure of PMUs. PMUs are directly dependent for power on the associated power network entity. Hence failure of a power network entity would make the associated PMU non-operational. A brief description of the underlying analysis is provided. The power network entities can be broadly categorized into -- Generation Bus, Load Bus, Neutral Bus and Transmission Lines. Consider a scenario in which there is an initial trip of a transmission line. This would cause some power to be re-routed through other lines to satisfy the load demands. Occasionally, this may result in some transmission lines to have power flow beyond their rated capacity limits causing them to trip. A series/cascade of such line trips might eventually cause a blackout. Regional power blackouts caused by cascading failure of components in power network have been seen previously in New York (2003) ~\cite{andersson2005causes}, San Diego (2011) ~\cite{sandiego} and India (2012) ~\cite{tang2012analysis}. From a system operator's point of view, understanding the vulnerable components in the power network is critical to making well-judged decisions for prevention of power failure. An abstract model that captures these dependencies and an algorithm that uses this model to identify the critical components fast is beneficial from the operator's point of view. 

For the last few years, there has been a considerable surge of research being done in studying the complex dependencies in different critical infrastructures that cause cascading failures. Existing literature relies mostly on graph based models to represent the dependencies \cite{Bul10}, \cite{Gao11}, \cite{Sha11}, \cite{Ros08}, \cite{Par13}, \cite{Ngu13}, \cite{Zus11}. These models, however, fail to capture the complex dependencies that might exist which limits their applicability to real world systems. Authors in \cite{sen2014identification} bring out the need to address the complex dependencies in an inter-dependent or intra-dependent critical infrastructure(s) which can be explained through the following example. Let $e_x,e_w,e_y,e_z$ (which can be a generator, substation, transmission line etc.) be power network entities. Consider the dependency where the entity $e_x$ is operational if (i) entities $e_w$ and (\emph{logical AND}) $e_y$ are operational, or (\emph{logical OR}) (ii) entity $e_z$ is operational. Models in \cite{Bul10}, \cite{Gao11}, \cite{Sha11}, \cite{Ros08}, \cite{Par13}, \cite{Ngu13}, \cite{Zus11}, \cite{Zha05} fails to capture this kind of dependency. Motivated by these findings and limitations of the existing models, the authors in \cite{sen2014identification} proposed a \emph{Boolean logic based dependency} model termed as \emph{Implicative Interdependency Model} (IIM). For the example stated above, the dependency of $e_x$ on $e_w,e_y,e_z$ can be represented as $e_x \leftarrow e_w \cdot e_y + e_z$. This equation representing the dependency of an entity is termed as \textit{Interdependency Relation} (IDR). 
 
The biggest challenge of the IIM formulation is the identification of the dependency relations between the different entities. In the past, this was done based on information obtained from subject matter experts. However, it has now become clear that this is not a very reliable procedure. The second problem is that IIM operates on Boolean logic, implying that the different entities can only have two values --- $0$ or $1$, representing the state of the entity being operational or non-operational. However, this does not provide information about entities which are operating at near-failure state. For example, consider a generator that is operational but operating at $95 \%$ of its peak generation capacity. The IIM model only provides the state of the generator (operational or non-operational). Hence the operator won't be alerted even though the generator is reaching its peak generation capacity (which might eventually fail due to over generation). Thus, such scenarios limits the applicability of the IIM model.

In order to overcome the limitations of IIM and to extend the application domain of IDRs for long-term planning and short-term operational management, the Multi-scale Implicative Interdependency Relation (MIIR) model is proposed. The MIIR model uses the notion of IDRs with added features to capture the power flow in transmission lines and demand/generation of buses. PMU data is used to generate the dependency equations as well as obtaining power flow and demand/generation values.

Using the MIIR model we study the $K$ contingency list problem in this paper. At a given time $t$ the problem solves for a set of $K$ components in the power network which when made non-operational at time $t$ would cause the maximum number of entities to fail. Additionally, the solution would provide insights into components which are operating at their near capacity limits. Such a solution would have an immediate benefit to a system operator making decisions at that time to prevent large-scale power failures. 

The paper is segregated as follows. In Section \ref{model}, we describe the MIIR model in details and its applicability through a case study of real world power failure. The $K$ contingency list problem and its computational complexity are detailed out in Section \ref{ProbForm}. Owing to the problem being NP-complete, a mixed integer program is provided that solves the problem optimally in Section \ref{Sol}. As it is critical to solving the problem fast, a sub-optimal heuristic with polynomial time complexity is discussed in the same section. A detailed experimental analysis for different bus systems is provided in Section \ref{ExpRes}.

\section{Related Work}
\label{Related}
The Presidential Commission on Critical Infrastructures, appointed by President Clinton in 1996 \cite{clintonexecutiveorder1996}, was one of the earliest studies done on understanding robustness and resiliency of Critical Infrastructure Interdependency. Rinaldi et al.\cite{rinaldi2004modeling}, \cite{rinaldi2001identifying} are among the first group of researchers to study the inter-dependency between critical infrastructures and proposed a complex adaptive system to model the inter-dependencies. A survey of Critical Infrastructure Interdependency modeling is presented in  Pederson et al. in \cite{pederson2006critical} which was undertaken by the U.S and a group of international researchers. As discussed before, there have been significant research to capture these inter-dependencies using graph-based models \cite{Bul10}, \cite{Gao11}, \cite{Sha11}, \cite{Ros08}, \cite{Par13}, \cite{Ngu13}, \cite{Zus11}. In general, the abstractions provided by the graph-based models fail to capture the complex inter-dependencies that might exist.

An event driven co-simulation framework for interdependent power-communication network was presented by Lin et al. in \cite{lin2011power}, \cite{lin2012geco}. The authors in \cite{Zha05} presented a game theoretic model for a multi-layer infrastructure networks using flow equilibrium. Authors in \cite{amin2013security} studied the security of interdependent and identical Networked Control System (NCS). They used a discrete-time stochastic linear system to model each plant with a shared communication network controlling the systems. In \cite{muller2016interfacing} the importance of considering power and communication infrastructures simultaneously was highlighted. Based on the Polish power grid blackout, a study of the impact of overhead lines tripping order on the severity of failure is presented in \cite{pfitzner2011controlled}. Analyzing failure in the smart grid under targeted initial attack was studied in \cite{ruj2014analyzing}. The effect of cyber (communication) and power network dependencies in the smart grid was studied in \cite{falahati2012reliability} for reliability assessments. Recovery of information of the failed entities in a power grid after a failure event was studied in \cite{soltan2015joint}. Topology based vulnerability assessment in power netowkr has been reported in \cite{wu2017study}, \cite{bompard2013analysis}, \cite{bompard2011structural}. As we rely on PMU data of the power network for generating the dependency equations, we investigated the current status of utilizing PMU data in addressing the complex inter-dependency issues. The roles of real-time measurements obtained from IEDs such as PMUs in mitigating failures have been investigated in \cite{pal2013applying}, \cite{wang2015multi}, \cite{wang2016use}, \cite{li2014transient}. However, most of the analysis in these papers does not gives insight about the instantaneous identification of most vulnerable entities in a power network. 
\section{The MIIR Model}
In this paper, the MIIR model is developed based on the power network. It is aimed to have a near accurate abstraction of the power flow dynamics and capture cascading failure propagation in the same. It is to be noted with proper modification the model has the potential to be extended for performing a similar analysis in different inter/intra-dependent critical infrastructure system(s).  
\label{model}
\subsection{Model Variables}
We consider load buses, generator buses, neutral buses (or zero injection buses) and transmission lines as different types of entities. Let $E = \{e_1,e_2,\hdots,e_n\}$ denote the set of entities in the power network. Each entity $e_i \in E$ has three values associated with it ---  (i) a lower bound $e_{i,l}$, (ii) an upper bound $e_{i,u}$, and (iii) the instantaneous power value $e_{i,c,t}$ at time $t$ of the entity. For a transmission line type entity $e_k$, the value of $e_{k,c,t}$ provides the power flow in that line at time $t$. Corresponding, for a load bus $e_m$ and generator bus $e_n$ the values $e_{m,c,t}$ and $e_{n,c,t}$ provides the load demand and generating value at time $t$. For the power transmission and distribution systems, PMU data can be used to get the instantaneous power value $e_{i,c,t}$ of the entity $e_i$. The values of $e_{i,l}$ and $e_{i,u}$ can be easily obtained from the entity rating data. For a given time $t$ the state of the entity is still Boolean (operational or not operational) and is guided by the following two factors --- (a) $e_{i,c,t}$ satisfies the property $e_{i,l} < e_{i,c,t} < e_{i,u}$, (b) the corresponding dependency equation of $e_i$ at time $t$ is satisfied. Hence, if $e_i$ has an IDR $e_i \leftarrow e_j \cdot e_k + e_l$ then for $e_i$ to be operational both the properties --- (a) ($e_i$ and $e_k$) or $e_l$ is operational at $t - \delta$, and (b) $e_{i,t} \le e_{i,c,t} \le e_{i,u}$ has to be satisfied. Here $\delta$ refers to the time within which the effect of failure of an entity is propagated to its dependent entity. So using MIIR model the power network at time $t$ is mathematically represented as $P(E,B,C_t,F)$ where $E$ is the set of of entities, $B$ is a set of tuples $\{e_{i,l}, e_{i,u}\}$ ($\forall e_i \in E$) denoting the power value bound on the entity, $C_t$ consist of instantaneous power value $e_{i,c,t}$ ($\forall e_i \in E$) at time $t$ and $F$ contains the set of dependency equations for the entities in $E$. A similar notation has been used in \cite{bompard2013analysis} but our notation brings out a completely different topological aspect of the power network. 

\subsection{Generating and Obtaining the Sets $B,C_t$ and $F$ for a Power Network}
\label{sec32}
We illustrate our strategy to generate the dependency equations $F$ and the set $C_t$ of a power network $P(E,B,C_t,F)$ at a given time $t$. The PMU data is substituted with simulated data due to its unavailability for different bus systems. The MATPOWER \cite{zimmerman2011matpower} software is used to generate the simulated data. For a given time $t$ and a standard bus system (containing a set of buses and transmission lines $E$), the software use load demand of the bus, the impedance of the transmission lines etc. to solve the power flow. The software produces the voltage of each bus in the system as the output. The software suite also includes a wide range of bus systems along with power ratings of the components for all such systems. We restrict ourselves to analyze the real power flow. Firstly, for a given solution, we formally state the procedure to obtain the tuple values of the set $B$ and instantaneous power value contained in the set $C$ (all values in MW) for generator buses, load buses, neutral buses and transmission lines ---

\begin{itemize}
    \item \textit{Generator Bus: }The real part of the power generated is taken as the value of $e_{a,c,t}$ for a generator bus $e_a \in E$. The upper bound $e_{a,u}$ is set to its real generation capacity (supplied in the MATPOWER suite) and the lower bound $e_{a,l}$ is set to $0$. It is to be noted that some generator buses have load demand. Consider $e_x$ be a generator bus with load demand $d$ units and real instantaneous power generated $e_{x,c,t}$ units. Without the loss of generality, such a bus is split into a generator bus $e_{x1}$ with $0$ load demand (instantaneous power generated $e_{x1,c,t}$) and a load bus $e_{x2}$ with instantaneous load demand $d$ units (instantaneous power generated $0$). A transmission line $e_{x12}$ is constructed that connects $e_{x1}$ to $e_{x2}$ with an instantaneous power flow of $d$ units flowing from $e_{x1}$ to $e_{x2}$.
    
    \item \textit{Load Bus: }The real part of the load demand is taken as instantaneous demand value $e_{b,c,t}$ of a load bus $e_b \in E$. For a load bus $e_b$, both its upper and lower bound is set to the instantaneous demand value $e_{b,c,t}$. Essentially, our assumption is that a load bus does not change its demand value irrespective of any failure.
    
    \item \textit{Neutral Bus: }For a netrual bus $e_d \in E$ the values of $e_{d,l}, e_{d,u}$ and $e_{d,c,t}$ are set to $0$.
    
    \item \textit{Transmission Lines: }For two buses $e_1$ and $e_2$ connected by a transmission line $e_{12}$ the power flowing through the transmission line is calculated as $P_{12} = Real(V_1 * (\frac{V_1 - V-2}{I_{12}})^*)$, where $V_1$ is the voltage at bus $e_1$, $V_2$ is the voltage at bus $e_2$ ($V_1$ and $V_2$ returned by the MATPOWER solver) and $I_{12}$ is the impedance of the transmission line $e_{12}$ (obtained from the supplied bus system file of MatPower). $P_{12}$ is the real component of the power flowing in the transmission line $e_{12}$. The lower bound is set to $0$ and the upper bound is taken as the rated capacity of the transmission line. The instantaneous power value $e_{12,c,t}$ is set to $|P_{12}|$ (absolute value).
\end{itemize}

\begin{figure}[ht]
\centering
\includegraphics[width=0.4\textwidth]{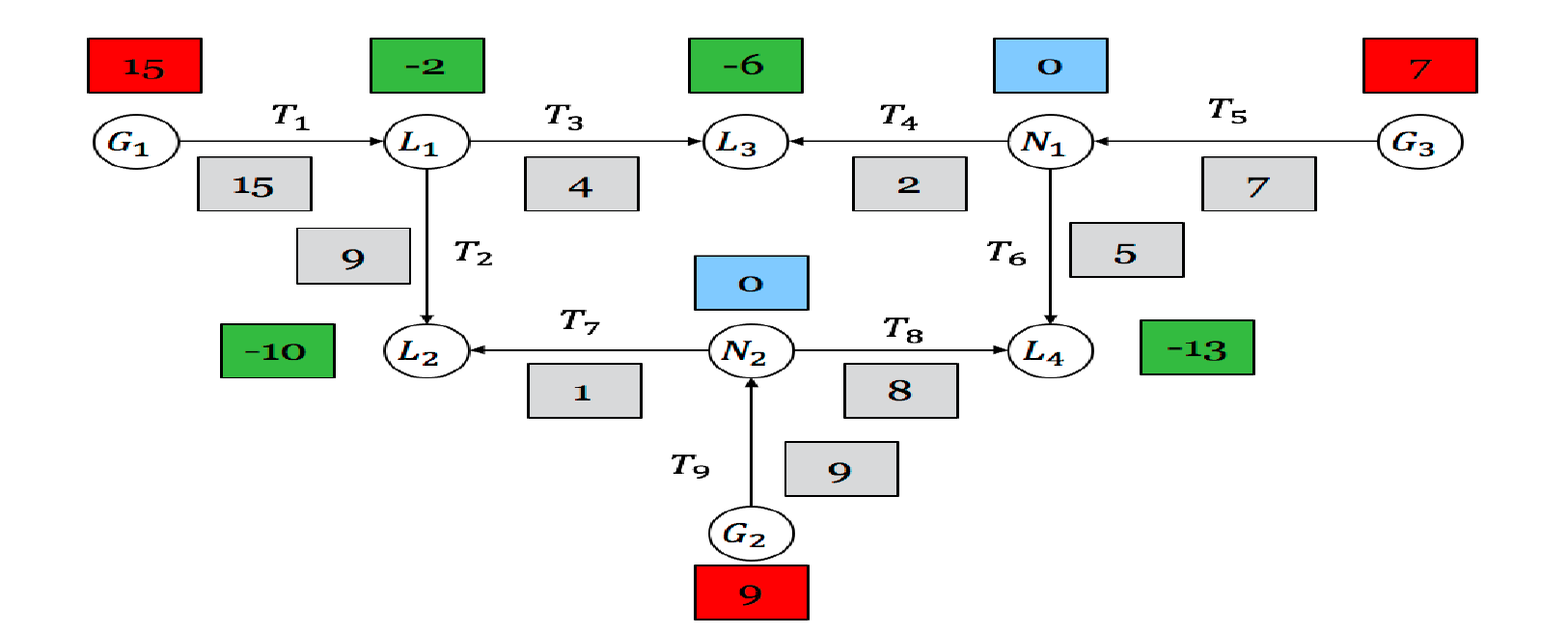}
\caption{A 9 bus power network system with flow, demand, and generation values at an arbitrary time instant $t$}
   \label{fig:dependentPower}
\end{figure}

The description of the transmission line type entity provided above shows that power flows from bus $e_1$ to $e_2$ if $P_{12}$ is positive and vice versa otherwise. As a result, we can interpret the direction of power flow in the line from the solution which we obtained from MatPower. We use this solution to generate the set F which is the set of dependency equations. As an example, we consider the nine bus system which is shown in Figure \ref{fig:dependentPower}. This figure describes a power network $P(E,B,C_t,F)$, at time instance $t$ with $E$ being the set of entities containing generator buses from $G_1$ to $G_3$, load buses $L_1$ to $L_4$, neutral buses $\{N_1,N_2\}$ and transmission lines $T_1$ through $T_9$. The figure also provides the instantaneous power values by solving the power network flow based on demand/generation at some time instant $t$. The red blocks denote the instantaneous real power generated by a generator, the green blocks denote instantaneous real load demands and the blue nodes are neutral. The values in the grey blocks denote the flow of power in the transmission lines with the arrows denoting the direction of power flow. There aren't any IDRs for the transmission lines. The IDRs for bus $b_1$ is created as follows:  (a) let $b_2$, $b_3$ be the buses and $b_{12}$ (between $b_1$ and $b_2$) and $b_{13}$ between ($b_1$ and $b_3$) be the transmission lines for which power flows from these buses to $b_1$, (b)  the dependency equation for the bus $b_1$ is constructed as disjunction of minterms of size $2$ (this consists of the bus from which the power is flowing and the respective transmission line) with each disjunction corresponding to buses from which power is flowing to it. For this example the dependency equation $b_1 \leftarrow b_{12} b_2 + b_{13} b_{3}$ is created. Using this definition, the dependency equations for the buses in Figure \ref{fig:dependentPower} are as follows --- (a) $L_1 \leftarrow T_1 \cdot G_1$ , (b) $L_2 \leftarrow T_2 \cdot L_1 + T_7 \cdot N_2$, (c) $L_3 \leftarrow T_3 \cdot L_1 + T_4 \cdot N_1$, (d) $L_4 \leftarrow T_6 \cdot N_1 + T_8 \cdot N_2$, (e) $N_1 \leftarrow T_5 \cdot G_3$, (f) $N_2 \leftarrow T_9 \cdot G_2$.

\subsection{Dynamics of the MIIR model}
\label{sec33}
To understand the dynamics of cascading failure in power network based on the MIIR model we first create the abstract representation $P(E,B,C_0,F)$ (which is constructed using technique discussed in Section \ref{sec32}) for a power network at time $t=0$. An event of initial failure is assumed to occur at time $t=0$ with failure cascade propagating in unit time steps. For an entity that is operational at time step $t = \tau$ the following equations are required to be satisfied --- 
\begin{equation} 
\sum_{e_m \in O_{e_g}} e_{m,c,\tau} = \sum_{e_n \in I_{e_g}} e_{n,c,\tau} + e_{g,c,\tau}, \forall e_g \in G
\label{eq1}
\end{equation}

\begin{equation}
\sum_{e_m \in O_{e_l}} e_{m,c,\tau}  = \sum_{e_n \in I_{e_l}} e_{n,c,\tau} - e_{l,c,\tau}, \forall e_l \in L
\label{eq2}
\end{equation}

\begin{equation}
\sum_{e_m \in O_{e_k}} e_{m,c,\tau} = \sum_{e_n \in I_{e_k}} e_{n,c,\tau} + e_{k,c,\tau}, \forall e_k \in N
\label{eq3}
\end{equation}

Equations \ref{eq1}-\ref{eq3} dictates the law of conservation of energy for each bus in the system. We assume that there is no delay in time between the power flowing out from a bus and power flowing into it. 


In equation \ref{eq1}, the transmission line entities that flows power out and into the generator bus $e_g$ (where $G \subset E$ contains all generator buses) are represented by sets $I_{e_g}$ and $O_{e_g}$ respectively. Equation \ref{eq2} and \ref{eq3} uses the same notations of transmission line for load bus $e_l$ (where $L \subset E$ contains all load buses) and neutral bus $e_k$ (where $N \subset E$ contains all neutral buses) respectively. In Equation \ref{eq3}, the value of $e_{k,c,t} = 0$ for all time steps and hence it can be simplified as $\sum_{e_m \in I_{e_k}} e_{m,c,t} = \sum_{e_n \in O_{e_k}} e_{n,c,t}$. Additionally, for a generator bus there is no power injected to it through transmission lines. Hence equation \ref{eq1} can be re-written as $\sum_{e_m \in I_{e_g}} e_{m,c,t} =  e_{g,c,t}, \forall e_g \in G$. We use this abstract representation of the power network i.e. $P(E,B,C_0,F)$ at time $t = 0$. Using this the cascading failure process of the power network on an event of initial failure of $E' \subset E$ at time step $t=0$ is detailed out in Algorithm \ref{AlgCas}.

\begin{algorithm}
\small
	\KwData{A power network $P(E,B,C_0,F)$ at time $t=0$ and a set of initially failing entities $E' \subset E$
		}
	\KwResult{A set of failed entities $S$
		}
	\Begin{			
		Initialize $S \leftarrow E'$, $size \leftarrow 0$ \;
		Increment $t \leftarrow t+1$ \;
		\While{$size \ne |S|$}{
		    Set $size \leftarrow |S|$\;
		    Kill all entities whose dependency equations are not satisfied and add them to set $S$\;
		    Adjust power flow values $e_{k,c,t}$ of transmission line entities $e_k$ and generating values $e_{g,c,t}$ of generator buses $e_g$ such that Equations \ref{eq1} -\ref{eq3} are satisfied.\;
		    Kill all entities whose bounds are not satisfied and add them to set $S$ \;
		    Increment $t \leftarrow t+1$ \;
		}
		\Return{$S$} \;
	}		
\caption{Algorithm describing the failure cascading process in power network using MIIR model}
\label{AlgCas}
\end{algorithm}

In Algorithm 1, at every iteration the flow values are adjusted based on Equations \ref{eq1}-\ref{eq3} at line $8$ and entities are made non-operational based on the two conditions mentioned at lines $7$ and $9$. The cascading process continues if a new entities fail in the previous time step (condition $size \ne |S|$). As evident we assume that there is a delay of $1$ time unit for an entity to become non operational if its dependency equations are unsatisfied. All entities whose bound values are not satisfied are made non operational at that time step after power flow calculation. It is to be noted that the dependency equations are generated from a graph which is \textit{directed acyclic}. Owing to this property the cascade reaches a steady (no new entities are non-operational) within $\mathcal{O}(|E|)$ times steps. This can be explained as follows. Consider a single initial failure of an entity. If no entity fails the cascading algorithm would continue till at most $|E| - 1$ time steps since the maximum distance between two nodes in a directed acyclic graph is $|E| - 1$ (considering each edge having a weight $1$). If more than one entity fails then the cascade is expected to stop before $|E|-1$. Hence the number of cascading time steps is strictly upper bounded by $|E|-1$.

Algorithm \ref{AlgCas} assumes that there exist a method to compute the flow value equations to get the instantaneous power values of entities. This is equivalent to computing the AC power flow equations again (using MatPower) which would be time intensive and does not make use of the abstraction created by the MIIR model to make any decision. Moreover, using general graph theoretical algorithmic techniques might result in multiple solutions of instantaneous power values when solving a given set of power flow equations thus resulting in ambiguity. To counteract this, in our abstraction, we use the notion of \textbf{Worst-Case Cascade Propagation} (WCCP) in Algorithm \ref{AlgCas}. Qualitatively, the instantaneous value of power flows and power generator at every time step $t > 0$ of the cascade is set to a value that would cause the maximum number of entities to fail at the end of the cascade. Computation of this power flow values using WCCP is proved to be NP-complete in Section \ref{ProbForm}. We devise a mixed integer porgram to get the optimal solution and a greedy heuristic to get a sub-optimal solution in polynomial time in Section \ref{Sol} that addresses WCCP.

\subsection{Case Study: The 2011 Southwest Blackout}
In this subsection the performance of MIIR with WCCP is tested on a real power system event: the 2011 Southwest Blackout. All data used in this analysis are obtained from \cite{sandiego}. An abstraction of the Southwest Power System is provided in Figure \ref{fig:southwest}. The abbreviations used in Figure \ref{fig:southwest} are --- Western Electricity Coordinating Council (WECC), Serrano (SE), Devers (DE), San Onofre Nuclear Generating Station (SONGS), San Diego Gas \& Electric (SDG \& E), Miguel (MI), Imperial Valley (IV), Imperial Irrigation District (IID), Comisión Federal de Electricidad's (CFE, corresponding to Baja California Control Area), North Gila (NG), Hassayampa (HA), Palo Verde (PV), and Western Area Power Administration-Lower Colorado (WAPA). The blue, orange and green blocks in Figure \ref{fig:southwest} represents neutral, load and generator buses respectively. The transmission lines are labeled $T1-T17$ with the arrows indicating the directions of the pre-disturbance power flows. On September 8, 2011, an initial trip of the HA-NG transmission line ($T11$) caused blackout in SDG \& E region. The objective here is to see whether MIIR model with worst case cascade propagation is able to capture the power outage.

\begin{figure}[ht]
\centering
\includegraphics[width=8cm]{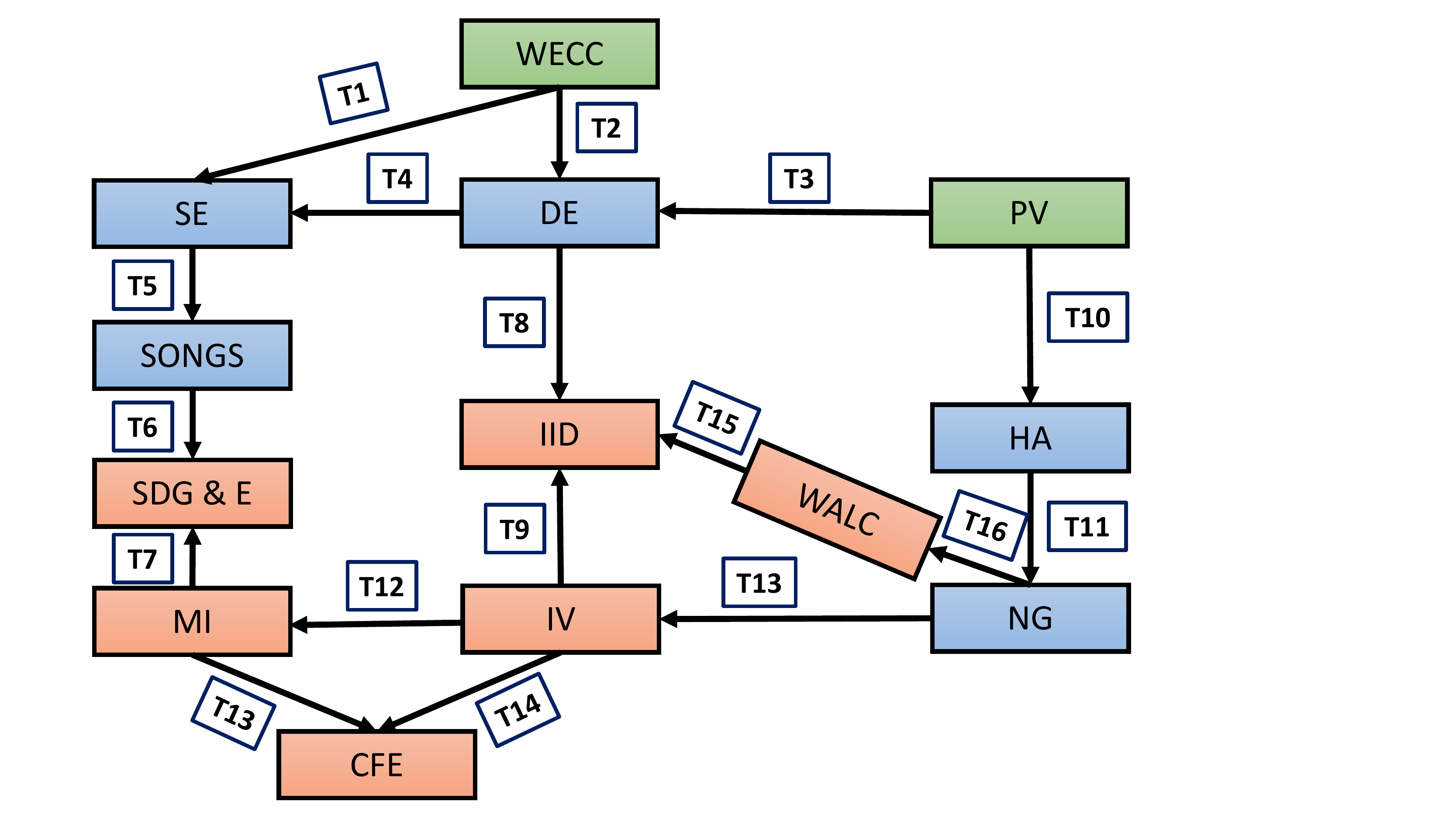}
\caption{An abstraction of the Southwest Power System}
\label{fig:southwest}
\end{figure}

\begin{table}[ht]
\centering
\begin{tabular}{ | l | r | r |} \hline
   {\bf Dependency Equations} \\ \hline
			$SE \leftarrow T1 \cdot WECC + T4 \cdot DE$ \\ \hline
			$DE \leftarrow T3 \cdot PV + T2 \cdot WECC$ \\ \hline
			$SONGS \leftarrow T5 \cdot SE$ \\ \hline 
			$SDG \& E \leftarrow T6 \cdot SONGS + T7 \cdot MI$  \\ \hline
			$IID \leftarrow T8 \cdot DE + T15 \cdot WALC + T9 \cdot IV$ \\ \hline
			$MI \leftarrow T12 \cdot IV$ \\ \hline
			$IV \leftarrow T13 \cdot NG$ \\ \hline 
			$CFE \leftarrow T14 \cdot IV + T13 \cdot MI$ \\ \hline
			$WALC \leftarrow T16 \cdot NG$ \\ \hline
			$NG \leftarrow T11 \cdot HA$ \\ \hline
			$HA \leftarrow T10 \cdot PV$ \\ \hline
	\end{tabular}
	\caption{IDRs of the Southwest Power System}
	\protect\label{tb:depenPowerSouth}
\end{table}

The dependency equations in Table \ref{tb:depenPowerSouth} without the bounds and instantaneous power values of the entities (buses and transmission lines) corresponds to the set $F$. Consider tripping of the entity $T11$ at $t=0$. Just considering the IDRs itself, the component $NG$ fails at $t=1$, $WALC$ and $IV$ at $t=2$, $CFE$ and $MI$ at $t=3$. The pre-disturbance load demands of SDG\&E and IID were approximately 5000 MW and 900 MW, respectively, while the generation bounds on PV and WECC were [0, 4000 MW] and [0, 10000 MW], respectively. After failure of T11, SDG\&E and IID would try to meet their bulk load demands through the generator buses PV and WECC via T6 and T8. The bound on T6 is [0, 2200 MW] and T8 is [0, 1800 MW]. Both PV and WECC have enough generation capacity to meet the load demand of SDG\&E and IID. At $t=3$ owing to the load demand of SDG \& E the transmission line $T6$ would have try to have a power flow of $5000$ MW instantly. Thus $T6$ would trip at $t=3$ causing SDG\&E to trip at $t=4$. Owing to this the power flowing through $T1,T4$ and $T5$ would reduce down to $0$ at $t=4$. The power flow in $T8$ would increase to $900$ MW at $t=3$ for supplying power to IID. Thus the steady state is reached at $t=4$ and MIIR model accurately predicts the blackout of SDG \& E region. 
\section{Problem Formulation}
\label{ProbForm} 
It is important from a power system operator's point of view to understand and know the most critical entities in the network at a given time. This would enable the operator to make more reliable decisions on an event of some failure. For larger systems, an automation that provides the operator with this information would be hugely beneficial. Owing to this we develop the $K$ Contingency List (KCoL) problem using MIIR model with WCCP. For a given time $t$ and an integer $K$ the problem provides the operator with a list of $K$ entities which when failed initially causes the maximum number of entities to fail at the steady state of cascade propagation. Qualitatively, for a given integer $K$ the problem finds a set $E'$ ($|E'| = K$) entities which when failed initially maximizes the total number of entities failed at the end of the cascading process. A formal description of the KCoL problem using WCCP of MIIR model for Power Network is provided --- \\ \\
\textbf{Input:} (a) A power network $P(E,B,C_t,F)$ where $E=G \cup L \cup N \cup T$. Set of entities $G$, $L$, $N$ and $T$ are disjoint and contains the generator buses, load buses, neutral buses and transmission lines respectively. (b) two positive integers $K$ and $S$. \\
\textbf{Decision Version: } Does there exist a set of $K$ entities in $E$ whose failure at time $t$ would result in a failure of at least $S$ entities in total at the end of the cascading process. \\
\textbf{Optimization Version: } Compute the a set $K$ entities in a power network $P(E,B,C_t,F)$ whose failure at time $t$ would maximize the number of entities failed at the steady state of cascade propagation. \\

We prove the problem is NP-complete to solve in Theorem \ref{thm:NPComplete} 

\begin{theorem}
\label{thm:NPComplete}
The KVNE problem using MIIRA model is NP-complete. 
\end{theorem}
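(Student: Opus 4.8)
The plan is to establish membership in NP and NP-hardness separately, deriving hardness by a reduction from \textsc{Clique}.

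\emph{Membership in NP.} For a ``yes'' instance the certificate I would use is the set $E'$ of the $K$ initially failed entities together with a full trace of Algorithm~\ref{AlgCas}: for every step $\tau$, the list of entities killed at that step and the transmission-line flows $e_{k,c,\tau}$ and generation values $e_{g,c,\tau}$ chosen when Equations~\ref{eq1}--\ref{eq3} are re-solved. Since every non-final pass of the loop strictly enlarges $S$, the loop runs $O(|E|)$ times, so the certificate is of polynomial size and is checked in polynomial time: at each step one verifies that the entities killed for an unsatisfied dependency equation are exactly those whose equation is false on the surviving sub-network, that the recorded flows satisfy Equations~\ref{eq1}--\ref{eq3}, that the entities killed for a bound violation are exactly those with $e_{i,c,\tau}\notin(e_{i,l},e_{i,u})$, that $S$ only grows, and that finally $|S|\ge S_{\mathrm{thr}}$. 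Because the worst-case cascade (WCCP) from $E'$ is, by definition, at least as destructive as the particular propagation recorded in any valid trace, a single trace reaching $S_{\mathrm{thr}}$ certifies acceptance, and conversely the WCCP-optimal propagation is itself such a trace.

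\emph{The reduction.} Given a \textsc{Clique} instance $(H=(V,E_H),K)$ with $2\le K\le|V|$ (the remaining cases of \textsc{Clique} being trivial), I would build $P(E,B,C_t,F)$ as follows. For each $v\in V$ create a generator bus $g_v$ with $e_{g_v,l}=e_{g_v,u}=e_{g_v,c,t}=0$ and no dependency equation; for each edge $e=(u,v)\in E_H$ create a neutral bus $x_e$ with $e_{x_e,l}=e_{x_e,u}=e_{x_e,c,t}=0$ and dependency equation $x_e\leftarrow g_u+g_v$ (if every minterm must pair a line with a bus, use $x_e\leftarrow \ell_e\cdot g_u+\ell_e\cdot g_v$ where $\ell_e$ is a zero-flow line on an isolated pair of dummy buses, hence never failing). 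There are no transmission lines carrying power and no load buses, so the all-zero assignment is the unique one satisfying Equations~\ref{eq1}--\ref{eq3} and no entity can ever violate a bound. Keep the budget $K$ and set the threshold to $S_{\mathrm{thr}}:=K+\binom{K}{2}$. This is clearly a polynomial-time construction.

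\emph{Correctness.} ($\Leftarrow$) If $C$ is a $K$-clique of $H$, failing $\{g_v:v\in C\}$ kills, at the next step, every $x_{uv}$ with $u,v\in C$ (its equation $g_u+g_v$ becomes false) --- all $\binom{K}{2}$ of them --- and since no flow adjustment is possible this is the WCCP cascade, giving $K+\binom{K}{2}$ failures. ($\Rightarrow$) Suppose some $E'$ of size $K$ forces $\ge K+\binom{K}{2}$ failures. Each $x_e$ is a sink of the dependency graph, so killing an $x_e$ (or a dummy line, in the variant) removes exactly one entity; hence replacing any non-generator member of $E'$ by a previously unused $g_v$ --- possible because $K\le|V|$ --- never decreases the failure count, and we may assume $E'=\{g_v:v\in C\}$ for some $C$ with $|C|=K$. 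With only these generators dead and no flow dynamics, the failed set is exactly $\{g_v:v\in C\}\cup\{x_{uv}:u,v\in C\}$, of size $K+|\{e\in E_H:e\subseteq C\}|$, which reaches the threshold only when $C$ induces all $\binom{K}{2}$ possible edges, i.e.\ when $C$ is a clique. Since \textsc{Clique} is NP-complete, the problem is NP-hard, and with the membership argument, NP-complete.

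\emph{Main obstacle.} The delicate part is the $(\Rightarrow)$ direction: I must exclude the possibility that a cleverer initial set, or an adversarial choice of flows inside Algorithm~\ref{AlgCas}, manufactures failures the edge-counting argument did not foresee. The gadget is tailored to foreclose this --- all power magnitudes are zero, the sole feasible flow respects every bound, and the only entities a solver might ``waste'' ($x_e$'s, or the dummy lines) are dependency sinks --- so the cascade is forced and the problem collapses to counting induced edges. Note this reduction exploits no capacity overloads at all; the companion statement used elsewhere in this section, that computing the WCCP power-flow values is itself NP-complete, would instead require a genuinely flow-based gadget.
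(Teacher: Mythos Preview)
Your proof is correct and takes a genuinely different route from the paper's. The paper reduces from the \textsc{Densest $p$-Subhypergraph} problem: each hyperedge becomes a load bus whose IDR is a disjunction over the hyperedge's vertices (each minterm being a transmission-line/generator pair), capacities are inflated so that no entity can ever trip on a bound, and the threshold is $p+M$. You instead reduce from \textsc{Clique}, make every magnitude zero so the flow equations have the unique all-zero solution, and set the threshold to $K+\binom{K}{2}$.

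Both constructions share the same core idea---engineer the instance so that the power-flow side of the model is inert and only the Boolean IDR layer matters---but they buy slightly different things. Your gadget is more elementary: \textsc{Clique} is the canonical NP-complete problem, the all-zero flows make the ``no bound is ever violated'' claim immediate rather than requiring a capacity-inflation argument, and your sink/replacement step for the $(\Rightarrow)$ direction is crisp. You also supply an explicit NP-membership argument via a cascade trace, which the paper omits. The paper's hypergraph reduction, on the other hand, more naturally mirrors the shape of the model's IDRs (arbitrary-arity disjunctions of line$\cdot$bus minterms with actual nonzero line flows), so the constructed instance looks like something the IDR-generation procedure of Section~\ref{sec32} could have produced; your basic IDR $x_e\leftarrow g_u+g_v$ needs the dummy-line patch to match that syntactic format. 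Either way, the hardness stands, and your closing remark---that this reduction exercises none of the WCCP flow machinery---is exactly right and applies equally to the paper's proof.
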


\begin{proof}
The problem is proved to NP-complete by a reduction from the densest $p-$subhypergraph problem \cite{hajiaghayi2006minimum}. An instance of the densest $p-$subhypergraph problem consists of a hypergraph $H=(V,E)$ and two parameters $p$ and $M$. The decision version of the problem finds the answer to \textit{whether there exists a set of vertices $V' \subset V$ and $|V'| \le p$ which completely covers at least $M$ hyper-edges}. 

From an instance of the densest $p-$subhypergraph we create an instance of the KCoL problem as follows. We start with an empty set of entities $G$, $L$ and $T$ and an empty set $F$ that would comprise of the dependency equations. A load type entity $L_j$ is added to set $L$ for each hyper-edge $E_j \in E$ with instantaneous load demand $L_{i,c,t}$ set to the number of vertices that comprise this hyper-edge. For each vertex $V_i \in V$ we add a generator type entity $G_i$ to set $G$. The upper bound on the capacity of the generator $G_i$ is set to the sum of all instantaneous load demands $L_{i,c,t} + 1$ for which the corresponding hyper-edge $E_j$ contains the vertex $V_i$. For each hyper-edge $E_j$ consisting of vertices $V_x, V_y,V_z$ (say) three transmission line type entities $T_x, T_y$ and $T_z$ are added to set $T$ and a dependency equation  $L_j \leftarrow T_x \cdot G_x + T_y \cdot G_y + T_z \cdot G_z$ is created and added to set $F$. The upper bound of the transmission line is set to the load demand $+1$ of the entity it connects to (e.g., in this case, the maximum capacity of each transmission line $T_x, T_y, T_z$ are set to the instantaneous load demand $L_{i,c,t} + 1$). The parameter $S$ of KCoL problem is set to $p + M$ and $K$ is set to $p$ (i.e. $p$ entities fail at time $t$). Thus the created instance satisfy the property of the graph from which the dependency relations are computed being \textit{Directed Acyclic}. In the initial operating condition at time $t$, all transmission lines have a line flow value of $1$ unit with each generator $G_i$ producing $P_i$ units of power, where $P_i$ is the number of load entities it is connected to. Hence all load demands are satisfied. It can be directly followed that an instance of KCoL problem can be created from an instance of densest $p-$subhypergraph problem in polynomial time. 

It is to be noted that for the created instance --- (1) Each transmission line has the capacity to satisfy the complete load demand of the load type entity it is connecting, (ii) Each generator has the capacity to satisfy the load demand of all the load type entities it is connected to. Hence an initial failure of one or more entities would not cause any transmission line or generator to trip (fail) because of exceeding its maximum capacity. Thus the generators and transmission lines are susceptible only to initial failure whereas the load entities are vulnerable to both initial and induced failures. However, failure of load entities can not cause any induced failure. Induced failure of the load entity can be caused only when each minterm in its dependency equation have at least one failed entity. Thus no entity fail due to change in power flow values. 

Now consider there exist a solution to the densest $p-$subhypergraph problem. Hence there exist a set of $p$ vertices $V'$ that completely covers $M$ hyper-edges. Failing the generator type entities corresponding to the vertices in $V'$ would thus fail at least $M$ load entities at $t+1$ according to the instance construction. Thus a total of at least $p + M$ entities would fail which solves the KCoL problem. On the other way round consider there exist a solution to the KCoL problem. As reasoned earlier, a load entity cannot cause any induced failure. Hence if a load entity is in the solution then it can be substituted with any operational generator entity without loss of correctness. Similarly, if a transmission line type entity is in the solution it can be replaced by a generator type entity it is connected to. Using this substitution a solution thus comprises of entities $G' \subset G$. All $M$ (or greater than $M$) entities that fail due to the initial failure of $p$ entities belongs to set $L$. Thus the substituted solution (or original solution if no substitution is required) would consist of generator type entities that cause failure of these $M$ (or greater than $M$) load entities. Hence selecting the vertices corresponding to $G'$ would ensure that at least $M$ hyper-edges are completely covered solving the densest $p-$subhypergraph problem. Hence proved.

\end{proof}
\section{Solutions to the Problem}
\label{Sol}
Owing to the KCoL problem being NP-complete, we obtain the optimal solution using an Mixed Integer Program (MIP). However, as we require to compute the contingency list fast, we devise a polynomial time heuristic that provides a sub-optimal solution to the problem. 

\subsection{Optimal Solution using Mixed Integer Program}
As a reference frame, we consider that the initial failure occurs at time step $t = 0$. It is shown in Section \ref{sec33} that the number of time steps in the cascade is upper bounded by $|E|-1$. We devise a Mixed Integer Program that solves the KCoL problem optimally for a power network $P(E,B,C_0,F)$ (the abstraction constructed for $t=0$). Irrespective of whether the steady state is reached before or at time step $|E|-1$, in our MIP we try to maximize the number of entities failed at $t = |E|-1$ when $K$ entities fail at $t = 0$. Moreover, it can not be predicted when the cascading failure stops. Hence, the MIP is bound to check for solution to compute the maximum number of entities that can fail till the maximum possible time step, i.e. $|E|-1$. Firstly, the list of variables used in the MIP formulation are discussed--- 
\begin{itemize}
    \item \textit{Variable List 1: } For each entity $e_i \in E$ a variable set $x_{i,t}, \forall t, 0 \le t \le |E|-1$ are created. The value of $x_{i,t}$ is $0$ if the entity is operational at time step $t$ and $1$ otherwise.  
    
    \item \textit{Variable List 2: } For each entity $e_i \in E$ a variable set $y_{i,t}, \forall t, 0 \le t \le |E|-1$ is created. From the set $C_{0}$ we can get the initial instantaneous power value $e_{i,c,0}$ of an entity $e_i$. The value of $y_{i,0}$ is set to $e_{i,c,0}$. All the instantaneous values are real thus comprising the set of non integer variables in the program.  
\end{itemize}

Using these definitions and the list of variables created, the objective of the MIP is provided in \ref{eqn:MIPobj} and the constraints of the MIP are formally described. 

\begin{equation}\label{eqn:MIPobj}
max\overset{|E|-1}{\underset{i=1}{\sum}}x_{i|E^2|}
\end{equation}

\noindent
\textit{Subjected to:} \\ 

\noindent
{\em Constraint Set 1:} $\sum_{i=1}^{|E|} x_{i,0} = K$. This constraint sets the number of entities failed at time step $t = 0$ to $K$.\\

\noindent
{\em Constraint Set 2:} $x_{i,d} \geq x_{i,t-1}, \forall t, 1 \leq t \leq |E|-1$. This ensures that an entity that is not operational at time step $t = d$ would remain non-operational in all times step $t > d$. \\

\noindent
{\em Constraint Set 3}: Consider an IDR of form ${e_i} \leftarrow e_a \cdot e_b + e_c \cdot e_d$. To capture the cascading failure process, a set of constraints is developed and described below --- 

\vspace{0.05in}
\noindent
{\em Step 1:} New variables are introduced to represent the minterms. In this example, two new variables $c_{ab}$ and $c_{cd}$ are created to represent the terms $e_a \cdot e_b$ and $e_c \cdot e_d$. This is equivalent of adding two new IDRs $c_{ab} \leftarrow e_a \cdot e_b$ and $c_{cd} \leftarrow e_c \cdot e_d$ with the transformed IDR being ${e_i} \leftarrow c_{ab}+c_{cd}$. 

\vspace{0.05in}
\noindent
{\em Step 2:} A linear constraint is developed for the $c$ type variables to capture the failure propagation. For an IDR $c_{ab} \leftarrow e_a \cdot e_b$, the constraint is represented as $c_{ab,t} \leq x_{a,t-1} + x_{b,t-1} , \forall t, 1 \leq t \leq |E|-1$. This captures the condition that $c_{ab,t}$ is equal to $1$ only if at least one of the entities $e_a$ or $e_b$ is non operational. 

\vspace{0.05in}
\noindent
{\em Step 3:} For each transformed IDR a linear constraint is introduced. For an IDR ${e_i} \leftarrow c_{ab} + c_{cd}$ the constraint is represented as $N \times x_{i,t} \leq c_{ab,t-1)} + c_{cd,t-1}, \forall t, 1 \leq t \leq |E| - 1$. Here $N$ is the number of minterms in the IDR (in this example $N=2$). \\

\noindent
{\em Constraint Set 4:} For a given load bus entity $e_l$, the constraint $y_{l,t} = 0, \forall t, 0 \le t \le |E|-1$ is added denoting that the instantaneous power demand of all the load bus remain constant at each time step. Similarly, for a given neutral bus entity $e_n$, the constraint $y_{n,t} = 0, \forall t, 0 \le t \le |E|-1$ is added.\\

\noindent
{\em Constraint Set 5:} For a given generator bus entity $e_p$ and transmission line entity $e_q$, the constraints $x_{p,t} \le \frac{y_{p,t+1}}{e_{p,u}}, \forall t, 1 \le t \le |E|-1$ and $x_{q,t} \le \frac{y_{q,t+1}}{e_{q,u}}, \forall t, 1 \le t \le |E|-1$ are added. As this is a maximization problem, the $x$ type variable of the corresponding generator/transmission line entity would be set to $1$ when it operates beyond its rated upper bound. The constraints $y_{a,t} \ge 0$ and $y_{a,t} \le e_{a,u} + 1$ are added at all time steps for each generator or transmission line type entity $e_a$. This limits the maximum value of these entities to its upper bound plus one and them failing only if their instantaneous power value is just above the upper bound. \\

\noindent
{\em Constraint Set 6:}  To capture the power flow equations in equations \ref{eq1}-\ref{eq3} the following constraints are developed. Consider the equation $\sum_{e_m \in O_{e_l}} e_{m,c,t}  = \sum_{e_n \in I_{e_l}} e_{n,c,t} - e_{l,c,t}$. Naively, this can be constructed as a non-linear constraint $\sum_{e_m \in O_{e_l}} (1-x_{m,t}) \times y_{m,t}  = \sum_{e_n \in I_{e_l}} (1-x_{n,t}) \times y_{n,t} - (1-x_{l,t}) y_{l,t+1}$. The constraint denotes that the instantaneous flow values of the transmission lines/load demand are taken into consideration if the the load bus is operational at the next time step (as failure due to IDR is reflected after $1$ unit of time). This constraint can be linearized as $\sum_{e_m \in O_{e_l}} (y_{m,t} - x_{m,t} \times e_{m,u})  = \sum_{e_n \in I_{e_l}} (y_{n,t} - x_{n,t} \times e_{n,u}) - (y_{l,t} - x_{l,t+1} e_{l,u})$. If a transmission line $e_n$ fails at time instant $t$ then its instantaneous power value is set to its upper bound (owing to constraint set $5$). This would equate the term $(y_{n,t} - x_{n,t} \times e_{n,u})$ corresponding to this transmission line to $0$. If the transmission line $e_n$ is operational then $x_{n,t} = 0$ and hence $(y_{n,t} - x_{n,t} \times e_{n,u})$ would equate to $y_{n,t-1}$ thus being considered in the power flow equation. Similarly if the load bus is not operational the value of $(y_{l,t} - x_{l,t+1} e_{l,u})$ is set to $0$. These constraints are constructed for all time steps $0 \le t \le |E|-2$ and similar constraints are generated for equations \ref{eq1} and \ref{eq3} as well. \\

\noindent
{\em Constraint Set 7:}For each transmission line type entity $e_a \in E$ flowing out power from a bus type entity $e_b$ the constraint $x_{a,t} \le x_{b,t}$ is added for each time step $1 \le t \le |E|-1$. This captures the condition that if a bus type entity fails then all transmission lines to which it transmits power also fails. \\

It is to be noted that there won't be any infeasibility in solution arising due to the constraints. The load and neutral buses can only be made non-operational through their dependency equations. Whereas, the transmission lines and generators can be only made non-operational through change in power flow/generation values (as they don't have any dependency equations). The objective in (\ref{eqn:MIPobj}) along with these set of constraints, finds the the set of $K$ entities whose initial failure at $t=0$ maximizes the number of entities failed at the end of cascading process. As this is a maximization problem the power flow and generation at each time step is set to values that maximizes the total number of entities failing at the steady state. Hence the MIP captures the notion of WCCP. 

\subsection{Heuristic Solution}
\label{sec2}
In this section we design a sub-optimal heuristic that finds a solution to the KCoL problem in polynomial time. Primarily we use the following two definitions that drive the heuristic. 

\begin{table*}[ht]
    \centering
		\begin{tabular}{|c|c|c|c||c|c||c|c||c|c||c|c|}  \hline
			\multicolumn{1}{|c|}{} &
			\multicolumn{1}{|c|}{} &
			\multicolumn{10}{c|}{\bf{$K$ values and number of entities dead in MIP and heuristic}}\\ \cline{3-12}
			\multicolumn{1}{|c}{\bf{DataSet}} &
			\multicolumn{1}{|c}{\bf{Number of}} &
			\multicolumn{2}{|c||}{\bf{1}} & \multicolumn{2}{|c||}{\bf{2}} & \multicolumn{2}{|c||}{\bf{3}} & \multicolumn{2}{|c||}{\bf{4}} & \multicolumn{2}{|c|}{\bf{5}}\\ \cline{3-12}
			\cline{3-12} & \bf{Entities} & \bf{MIP} & \bf{Heu} & \bf{MIP} & \bf{Heu} & \bf{MIP} & \bf{Heu} & \bf{MIP} & \bf{Heu} & \bf{MIP} & \bf{Heu} \\ \hline
			\bf{9 bus} & $24$ & $15$ & $15$ & $19$ & $19$ & $24$ & $24$ & $24$ & $24$ & $24$ & $24$\\ \hline
			\bf{14 bus} & $44$ & $29$ & $29$ & $32$ & $31$ & $36$ & $35$ & $42$ & $41$ & $44$ & $43$\\ \hline
		    \bf{24 bus} & $80$ & $48$ & $48$ & $51$ & $51$ & $54$ & $54$ & $58$ & $58$ & $61$ & $61$\\ \hline
		    \bf{30 bus} & $83$ & $54$ & $54$ & $58$ & $58$ & $64$ & $63$ & $71$ & $70$ & $74$ & $73$\\ \hline
			\bf{39 bus} & $105$ & $63$ & $63$ & $70$ & $70$ & $74$ & $74$ & $79$ & $78$ & $84$ & $82$\\ \hline	
			\bf{57 bus} & $149$ & $96$ & $96$ & $113$ & $108$ & $128$ & $119$ & $136$ & $128$ & $140$ & $136$\\ \hline
			\bf{118 bus} & $405$ & $240$ & $240$ & $245$ & $245$ & $250$ & $250$ & $255$ & $254$ & $260$ & $260$\\ \hline
			\bf{145 bus} & $666$ & $499$ & $499$ & $511$ & $511$ & $517$ & $516$ & $522$ & $517$ & $527$ & $523$\\ \hline
			\bf{300 bus} & $847$ & $493$ & $491$ & $506$ & $506$ & $513$ & $513$ & $524$ & $524$ & $531$ & $530$\\ \hline
			\bf{2383wp bus} & $5923$ & $3249$ & $3249$ & $3296$ & $3278$ & $3326$ & $3308$ & $3354$ & $3332$ & $3382$ & $3353$\\ \hline
		\end{tabular}
		\captionsetup{justification=centering}
		\caption{Quality of solution comparison of Mixed Integer Program and Heuristic for different Power Network bus systems and varying $K$}
		\protect\label{tbl:Quality}
\end{table*}

\begin{table*}[ht]
    \centering
		\begin{tabular}{|c|c|c|c||c|c||c|c||c|c||c|c|}  \hline
			\multicolumn{1}{|c|}{} &
			\multicolumn{1}{|c|}{} &
			\multicolumn{10}{c|}{\bf{$K$ values and running time in sec for MIP and in ms for Heuristic}}\\ \cline{3-12}
			\multicolumn{1}{|c}{\bf{DataSet}} &
			\multicolumn{1}{|c}{\bf{IDR Generation}} &
			\multicolumn{2}{|c||}{\bf{1}} & \multicolumn{2}{|c||}{\bf{2}} & \multicolumn{2}{|c||}{\bf{3}} & \multicolumn{2}{|c||}{\bf{4}} & \multicolumn{2}{|c|}{\bf{5}}\\ \cline{3-12}
			\cline{3-12} & \bf{Time (ms)} & \bf{MIP} & \bf{Heu} & \bf{MIP} & \bf{Heu} & \bf{MIP} & \bf{Heu} & \bf{MIP} & \bf{Heu} & \bf{MIP} & \bf{Heu} \\ \hline
			\bf{9 bus} & $0.31$ & $0.026$ & $2$ & $0.13$ & $1$ & $0.02$ & $1$ & $0.02$ & $1$ & $0.01$ & $1$\\ \hline
			\bf{14 bus} & $0.34$ & $0.09$ & $2$ & $0.14$ & $1$ & $0.08$ & $1$ & $0.05$ & $1$ & $0.04$ & $1$\\ \hline
		    \bf{24 bus} & $0.60$ & $0.27$ & $4$ & $0.16$ & $4$ & $0.19$ & $6$ & $0.18$ & $3$ & $0.27$ & $3$\\ \hline
		    \bf{30 bus} & $0.55$ & $0.24$ & $5$ & $0.20$ & $5$ & $0.17$ & $6$ & $0.16$ & $3$ & $0.18$ & $4$\\ \hline
			\bf{39 bus} & $0.62$ & $0.30$ & $8$ & $0.17$ & $14$ & $0.28$ & $6$ & $0.28$ & $7$ & $0.25$ & $5$\\ \hline	
			\bf{57 bus} & $0.76$ & $1.44$ & $17$ & $0.65$ & $13$ & $0.60$ & $15$ & $0.74$ & $15$ & $0.60$ & $17$\\ \hline
			\bf{118 bus} & $1.49$ & $1.78$ & $33$ & $1.09$ & $52$ & $1.16$ & $64$ & $0.73$ & $33$ & $1.02$ & $41$\\ \hline
			\bf{145 bus} & $2.22$ & $10.48$ & $60$ & $9.45$ & $83$ & $9.49$ & $91$ & $10.26$ & $147$ & $12.72$ & $76$\\ \hline
			\bf{300 bus} & $2.09$ & $11.09$ & $146$ & $4.23$ & $249$ & $4.68$ & $364$ & $3.63$ & $253$ & $4.34$ & $293$\\ \hline
			\bf{2383wp bus} & $7.66$ & $315$ & $5$ & $220$ & $7$ & $225$ & $11$ & $218$ & $14$ & $262$ & $18$\\ \hline
		\end{tabular}
		\captionsetup{justification=centering}
		\caption{Run time comparison of Mixed Integer Program and Heuristic for different Power Network bus systems and varying $K$}
		\protect\label{tbl:Runtime}
\end{table*}

\begin{definition}
	\label{KS}
	\emph{Kill Set:} For an entity $e_j \in E$ in an power network $P(E,B,C_t,F)$, the Kill Set is denoted as $KS(e_j)$. This set comprises of all the entities made non-operational including the entity itself at the end of the cascading process when entity $e_j$ is made non-operational initially.
\end{definition} 

\begin{definition}
	\label{FM}
	\emph{Fractional Minterm Hit Value:} For an entity $e_j \in E$ in an power network $P(E,B,C_t,F)$ the Fractional Minterm Hit Value is denoted as $FMHV(e_j)$. It is calculated as $FMHV(e_j) = \sum_{i = 1}^{m} \frac{c_i}{|s_i|}$ where $m$ are the minterms and for a given minterm $m_i$, $c_i$ are the number of entities that belong to $Kill(e_j)$ and $|s_i|$ is its size. This metric provides an estimate of impact of other operational entities that can be made non-operational at future time steps if the entity $e_j$ is made non-operational.
\end{definition} 

\begin{algorithm}[ht!]
\small
	\KwData{An power network $P(E,C_0,F)$ at time $t=0$ and an integer $K$. 
		}
	\KwResult{A set of initially failing entities $E' \subset E$ and $|E'| \le K$
		}
	\Begin{			
		Initialize $\mathcal{D} \leftarrow \emptyset$\, $E' \leftarrow \emptyset$ and $K_H \leftarrow 0$\;
			\While  {$K_H < K$}{
			     Update $K_H \leftarrow K_H + 1$\;
				 For each entity $e_i \in E \backslash \mathcal{D}$ compute the kill set $C_{e_i}$\;
				 For each entity $e_i \in E\backslash \mathcal{D}$ compute $FMHV(e_i)$\;
				Let $e_j$ be the entity having highest $|C_{e_j}|$ \;
				\If{There exists multiple entities having highest cardinality Kill Set}{
					Let $e_p$ be an entity having highest $FMHV(e_p)$ with $e_p$ in the set of entities having highest cardinality Kill Set\;
					If there is a tie choose arbitrarily\;
					Update $E' \leftarrow E' \cup \{ e_p \}$, $\mathcal{D} \leftarrow \mathcal{D} \cup C_{e_p}$ \;
     				Update all dependencies in $F$ by removing entities in the left and right side of the IDRs that belong to $C_{e_p}$\;
				}
				\Else{
					Update $E' \leftarrow E' \cup \{ e_j \}$, $\mathcal{D} \leftarrow \mathcal{D} \cup C_{x_j}$ \;
     				Update all dependencies in $F$ by removing entities in the left and right side of the IDRs that belong to $C_{e_p}$\;
				}
			}
			\Return{$E'$} \;
	}		
\caption{Heuristic Solution to KoCL problem}
\label{alg:alg2}
\end{algorithm}

Algorithm \ref{alg:alg2} returns a sub-optimal value of $E'$ which when failed initially would greedily maximize (based on \emph{Kill Set} and \emph{FMHV}) the number of entities failed at the end of the cascade. The algorithm runs in $\mathcal{O}(n^3)$ where $n = |E|$. It is to be note that the greedy failure maximization is done based on IDR. To get the actual number of entities failed when the set of entities $E'$ fail initially we use the MIP. Essentially, we modify the constraint 1 such that only entities in $E'$ fail at $t=0$ and see the number of entites failed at the final time step. This gives us a measure to compare the efficacy of the heuristic solution with respect to the MIP. 
\section{Experimental Results}
\label{ExpRes}
We analyzed the run time performance and quality of the heuristic solution with respect to MIP for different bus systems. The quality of the solution is defined by the number of components reported to be non-operational for a given value of $K$. Specifically we used the $9, 14, 24, 30, 39, 57, 118, 145, 300$ and $2383$ Winter Polish bus systems available in MATPOWER. For a given bus system, we used the MATPOWER AC power solver. Using the data, the abstract power network $P(E,C_0,F)$ was generated. On the constructed power network the MIP and heuristic solutions were executed. The implementation was done in Java and a student licensed version of IBM CPLEX optimizer was used to solve the MIP. A UNIX system with 8 GB of RAM and intel i5 processor was used for the execution. All the implementation along with data sets are made open source. The repository can be found in \emph{https://github.com/jbanerje1989/ContingencyList}. 

In Table \ref{tbl:Quality}, a comparison between the MIP and the heuristic solution with respect to the number of entities in non-operational state for different bus systems with $K$ varied from $1$ to $5$ in steps of $1$ are provided. Additionally, the total number of entities (buses and transmission lines) for each bus system is mentioned. It is to be noted that the total number of entities and the number of entities in non-operational state include the entities constructed for generator buses with non-zero load demand (as mentioned in Section \ref{sec32}). Table \ref{tbl:Runtime} reports the IDR generation time for each bus system along with the time taken to execute the MIP and heuristic for different values of $K$. Some insightful observations from the results are as follows --- (a) The heuristic solution performs very nearly to that of MIP with respect to quality and have an almost same performance for $K=1$. (b) For almost all the cases, the maximum percent difference in the number of non operational entities in heuristic with respect to MIP is under $1 \%$ with a maximum percent difference of $7 \%$ for $57$ bus system at $K=3$. (c) It is observed that more than $50 \%$ of the total entities in a given bus system will be non-operational if $K=1$. This implies that the power system are extremely vulnerable even if a single entity is attacked, (d) For almost all the bus systems from $9$ to $300$ the heuristic finds a solution to the \emph{KCol} problem nearly $100$ times faster than the MIP, (e) for the $2383$ Winter Polish bus system the heuristic is $10$ to $20$ times faster. However, it is be noted that the comparison is done based on a serialized implementation of the heuristic. There exists scope to parallelize the heuristic to achieve even faster run time.

Hence, it can be reasonably argued that the Heuristic solves the \emph{KCol} problem achieving near optimal solution at a much faster time compared to MIP. Thus the abstraction provided by the MIIR model along with the Heuristic solution can be used by a Power Network operator to obtain the \emph{$K$ Contingency List} at any given time and make appropriate control decisions.

\section{Conclusion}
\label{Conc}
In this paper, we introduce the MIIR model to capture dependencies and analysis of cascading failure in power network. We used the model to solve the $K$ Contingency list problem that finds a set of $K$ entities which when made non-operational at a certain time step would cause the maximum number of entities to fail at the end of cascading process. As the problem is NP-complete we devise a MIP to obtain the optimal solution and a polynomial time sub-optimal heuristic. The time and quality performance of the heuristic with respect to the MIP are compared for different bus systems and it is shown that the heuristic outperforms the MIP with respect to running time and additionally produces near optimal solution.
\begin{footnotesize}
\bibliographystyle{IEEEtran}
\bibliography{references,referencesBibToAdd}

\begin{thebibliography}{10}
\providecommand{\url}[1]{#1}
\csname url@samestyle\endcsname
\providecommand{\newblock}{\relax}
\providecommand{\bibinfo}[2]{#2}
\providecommand{\BIBentrySTDinterwordspacing}{\spaceskip=0pt\relax}
\providecommand{\BIBentryALTinterwordstretchfactor}{4}
\providecommand{\BIBentryALTinterwordspacing}{\spaceskip=\fontdimen2\font plus
\BIBentryALTinterwordstretchfactor\fontdimen3\font minus
  \fontdimen4\font\relax}
\providecommand{\BIBforeignlanguage}[2]{{%
\expandafter\ifx\csname l@#1\endcsname\relax
\typeout{** WARNING: IEEEtran.bst: No hyphenation pattern has been}%
\typeout{** loaded for the language `#1'. Using the pattern for}%
\typeout{** the default language instead.}%
\else
\language=\csname l@#1\endcsname
\fi
#2}}
\providecommand{\BIBdecl}{\relax}
\BIBdecl

\bibitem{andersson2005causes}
G.~Andersson, P.~Donalek, R.~Farmer, N.~Hatziargyriou, I.~Kamwa, P.~Kundur,
  N.~Martins, J.~Paserba, P.~Pourbeik, J.~Sanchez-Gasca \emph{et~al.}, ``Causes
  of the 2003 major grid blackouts in north america and europe, and recommended
  means to improve system dynamic performance,'' \emph{Power Systems, IEEE
  Transactions on}, vol.~20, no.~4, pp. 1922--1928, 2005.

\bibitem{sandiego}
F.~E.~R. Commission and the North American Electric Reliability~Corporation,
  ``{Arizona - Southern California Outages on September 8, 2011, Causes and
  Recommendations},'' Tech. Rep., 04 2012.

\bibitem{tang2012analysis}
Y.~Tang, G.~Bu, and J.~Yi, ``Analysis and lessons of the blackout in indian
  power grid on july 30 and 31, 2012,'' in \emph{Zhongguo Dianji Gongcheng
  Xuebao(Proceedings of the Chinese Society of Electrical Engineering)},
  vol.~32.\hskip 1em plus 0.5em minus 0.4em\relax Chinese Society for
  Electrical Engineering, 2012, pp. 167--174.

\bibitem{Bul10}
S.~V. Buldyrev, R.~Parshani, G.~Paul, H.~E. Stanley, and S.~Havlin,
  ``Catastrophic cascade of failures in interdependent networks,''
  \emph{Nature}, vol. 464, no. 7291, pp. 1025--1028, 2010.

\bibitem{Gao11}
J.~Gao, S.~V. Buldyrev, H.~E. Stanley, and S.~Havlin, ``Networks formed from
  interdependent networks,'' \emph{Nature Physics}, vol.~8, no.~1, pp. 40--48,
  2011.

\bibitem{Sha11}
J.~Shao, S.~V. Buldyrev, S.~Havlin, and H.~E. Stanley, ``Cascade of failures in
  coupled network systems with multiple support-dependence relations,''
  \emph{Physical Review E}, vol.~83, no.~3, p. 036116, 2011.

\bibitem{Ros08}
V.~Rosato, L.~Issacharoff, F.~Tiriticco, S.~Meloni, S.~Porcellinis, and
  R.~Setola, ``Modelling interdependent infrastructures using interacting
  dynamical models,'' \emph{International Journal of Critical Infrastructures},
  vol.~4, no.~1, pp. 63--79, 2008.

\bibitem{Par13}
M.~Parandehgheibi and E.~Modiano, ``Robustness of interdependent networks: The
  case of communication networks and the power grid,'' \emph{arXiv preprint
  arXiv:1304.0356}, 2013.

\bibitem{Ngu13}
D.~T. Nguyen, Y.~Shen, and M.~T. Thai, ``Detecting critical nodes in
  interdependent power networks for vulnerability assessment,'' 2013.

\bibitem{Zus11}
A.~Bernstein, D.~Bienstock, D.~Hay, M.~Uzunoglu, and G.~Zussman, ``Power grid
  vulnerability to geographically correlated failures-analysis and control
  implications,'' \emph{arXiv preprint arXiv:1206.1099}, 2012.

\bibitem{sen2014identification}
A.~Sen, A.~Mazumder, J.~Banerjee, A.~Das, and R.~Compton, ``Identification of k
  most vulnerable nodes in multi-layered network using a new model of
  interdependency,'' in \emph{Computer Communications Workshops (INFOCOM
  WKSHPS), 2014 IEEE Conference on}.\hskip 1em plus 0.5em minus 0.4em\relax
  IEEE, 2014, pp. 831--836.

\bibitem{Zha05}
P.~Zhang, S.~Peeta, and T.~Friesz, ``Dynamic game theoretic model of
  multi-layer infrastructure networks,'' \emph{Networks and Spatial Economics},
  vol.~5, no.~2, pp. 147--178, 2005.

\bibitem{clintonexecutiveorder1996}
W.~J. Clinton, ``Executive order eo 13010 critical infrastructure protection,''
  \url{http://www.fas.org/irp/offdocs/eo13010.htm}, Tech. Rep., July 1996.

\bibitem{rinaldi2004modeling}
S.~M. Rinaldi, ``Modeling and simulating critical infrastructures and their
  interdependencies,'' in \emph{System sciences, 2004. Proceedings of the 37th
  annual Hawaii international conference on}.\hskip 1em plus 0.5em minus
  0.4em\relax IEEE, 2004, pp. 8--pp.

\bibitem{rinaldi2001identifying}
S.~M. Rinaldi, J.~P. Peerenboom, and T.~K. Kelly, ``Identifying, understanding,
  and analyzing critical infrastructure interdependencies,'' \emph{Control
  Systems, IEEE}, vol.~21, no.~6, pp. 11--25, 2001.

\bibitem{pederson2006critical}
P.~Pederson, D.~Dudenhoeffer, S.~Hartley, and M.~Permann, ``Critical
  infrastructure interdependency modeling: a survey of us and international
  research,'' \emph{Idaho National Laboratory}, pp. 1--20, 2006.

\bibitem{lin2011power}
H.~Lin, S.~Sambamoorthy, S.~Shukla, J.~Thorp, and L.~Mili, ``Power system and
  communication network co-simulation for smart grid applications,'' in
  \emph{Innovative Smart Grid Technologies (ISGT), 2011 IEEE PES}.\hskip 1em
  plus 0.5em minus 0.4em\relax IEEE, 2011, pp. 1--6.

\bibitem{lin2012geco}
H.~Lin, S.~S. Veda, S.~S. Shukla, L.~Mili, and J.~Thorp, ``Geco: Global
  event-driven co-simulation framework for interconnected power system and
  communication network,'' \emph{IEEE Transactions on Smart Grid}, vol.~3,
  no.~3, pp. 1444--1456, 2012.

\bibitem{amin2013security}
S.~Amin, G.~A. Schwartz, and S.~Shankar~Sastry, ``Security of interdependent
  and identical networked control systems,'' \emph{Automatica}, vol.~49, no.~1,
  pp. 186--192, 2013.

\bibitem{muller2016interfacing}
S.~C. e.~a. M{\"u}ller, ``Interfacing power system and ict simulators:
  Challenges, state-of-the-art, and case studies,'' \emph{IEEE Transactions on
  Smart Grid}, 2016.

\bibitem{pfitzner2011controlled}
R.~Pfitzner, K.~Turitsyn, and M.~Chertkov, ``Controlled tripping of overheated
  lines mitigates power outages,'' \emph{arXiv preprint arXiv:1104.4558}, 2011.

\bibitem{ruj2014analyzing}
S.~Ruj and A.~Pal, ``Analyzing cascading failures in smart grids under random
  and targeted attacks,'' in \emph{Advanced Information Networking and
  Applications (AINA), 2014 IEEE 28th International Conference on}.\hskip 1em
  plus 0.5em minus 0.4em\relax IEEE, 2014, pp. 226--233.

\bibitem{falahati2012reliability}
B.~Falahati, Y.~Fu, and L.~Wu, ``Reliability assessment of smart grid
  considering direct cyber-power interdependencies,'' \emph{Smart Grid, IEEE
  Transactions on}, vol.~3, no.~3, pp. 1515--1524, 2012.

\bibitem{soltan2015joint}
S.~Soltan, M.~Yannakakis, and G.~Zussman, ``Joint cyber and physical attacks on
  power grids: Graph theoretical approaches for information recovery,'' in
  \emph{ACM SIGMETRICS Performance Evaluation Review}, vol.~43, no.~1.\hskip
  1em plus 0.5em minus 0.4em\relax ACM, 2015, pp. 361--374.

\bibitem{wu2017study}
D.~e.~a. Wu, ``Wu, di and ma, feng and javadi, milad and thulasiraman,
  krishnaiya and bompard, ettore and jiang, john n,'' \emph{Physica A:
  Statistical Mechanics and its Applications}, vol. 466, pp. 295--309, 2017.

\bibitem{bompard2013analysis}
E.~Bompard, E.~Pons, and D.~Wu, ``Analysis of the structural vulnerability of
  the interconnected power grid of continental europe with the integrated power
  system and unified power system based on extended topological approach,''
  \emph{International Transactions on Electrical Energy Systems}, vol.~23,
  no.~5, pp. 620--637, 2013.

\bibitem{bompard2011structural}
E.~Bompard, D.~Wu, and F.~Xue, ``Structural vulnerability of power systems: A
  topological approach,'' \emph{Electric Power Systems Research}, vol.~81,
  no.~7, pp. 1334--1340, 2011.

\bibitem{pal2013applying}
A.~Pal, J.~S. Thorp, S.~S. Veda, and V.~Centeno, ``Applying a robust control
  technique to damp low frequency oscillations in the wecc,''
  \emph{International Journal of Electrical Power \& Energy Systems}, vol.~44,
  no.~1, pp. 638--645, 2013.

\bibitem{wang2015multi}
T.~Wang, A.~Pal, J.~S. Thorp, Z.~Wang, J.~Liu, and Y.~Yang,
  ``Multi-polytope-based adaptive robust damping control in power systems using
  cart,'' \emph{IEEE Transactions on Power Systems}, vol.~30, no.~4, pp.
  2063--2072, 2015.

\bibitem{wang2016use}
T.~Wang, A.~Pal, J.~S. Thorp, and Y.~Yang, ``Use of polytopic convexity in
  developing an adaptive interarea oscillation damping scheme,'' \emph{IEEE
  Transactions on Power Systems}, 2016.

\bibitem{li2014transient}
M.~Li, A.~Pal, A.~G. Phadke, and J.~S. Thorp, ``Transient stability prediction
  based on apparent impedance trajectory recorded by pmus,''
  \emph{International Journal of Electrical Power \& Energy Systems}, vol.~54,
  pp. 498--504, 2014.

\bibitem{zimmerman2011matpower}
R.~D. e.~a. Zimmerman, ``Matpower: Steady-state operations, planning, and
  analysis tools for power systems research and education,'' \emph{IEEE
  Transactions on power systems}, vol.~26, no.~1, pp. 12--19, 2011.

\bibitem{hajiaghayi2006minimum}
M.~Hajiaghayi, K.~Jain, K.~Konwar, L.~Lau, I.~Mandoiu, A.~Russell,
  A.~Shvartsman, and V.~Vazirani, ``The minimum k-colored subgraph problem in
  haplotyping and dna primer selection,'' in \emph{Proceedings of the
  International Workshop on Bioinformatics Research and Applications
  (IWBRA)}.\hskip 1em plus 0.5em minus 0.4em\relax Citeseer, 2006.

\end{thebibliography}
\end{footnotesize}
\end{document}